
\documentclass[letterpaper, 10 pt, conference]{ieeeconf}  

\IEEEoverridecommandlockouts                              

\overrideIEEEmargins                                      




\usepackage[dvipsnames]{xcolor}
\usepackage{TLC}
\usepackage{wrapfig}
\usepackage{amsmath}
\usepackage{amssymb}
\usepackage{stmaryrd} 
\usepackage{pgfplotstable}
\usepgfplotslibrary{groupplots}
\usepackage{tikz}
\usetikzlibrary{plotmarks}
\usetikzlibrary{positioning}
\tikzset{
  jumpdot/.style={mark=*,solid},
  excl/.append style={jumpdot,fill=white},
  incl/.append style={jumpdot,fill=black},
}

\usepackage[autostyle=false, style=english]{csquotes}
\MakeOuterQuote{"} 

\newcommand{\Ro}{\ensuremath \R\backslash\{0\}}
\newcommand{\LT}{\ensuremath L_{2,T}}
\newcommand{\sign}{\ensuremath \operatorname{sign}}
\newcommand{\sq}[1][T]{\setlength{\fboxsep}{1pt}\raisebox{0.5pt}{\,\framebox{\scriptsize #1}}\,}
\newcommand{\argmin}{\ensuremath \operatorname*{argmin}}

\title{\LARGE \bf
Amplitude response and square wave describing functions
}

\author{Thomas Chaffey$^{1}$ and Fulvio Forni$^{2}$
    \thanks{$^{1}$ School of Electrical and Computer Engineering, University of
        Sydney, Australia. Email: \texttt{tlc37@cam.ac.uk}.}
\thanks{$^{2}$Department of Engineering, University of Cambridge,
Cambridge, CB2 1PZ, UK. Email: \texttt{f.forni@eng.cam.ac.uk}.}%
\thanks{T. Chaffey was formerly with the Department of Engineering, University of
Cambridge, where his work was supported by Pembroke College, Cambridge.}}

\begin{document}

\maketitle
\thispagestyle{empty}
\pagestyle{empty}

\begin{abstract}
    An analog of the describing function method is developed using square waves rather than sinusoids.  Static nonlinearities map square waves to square waves, and their behavior is characterized by their response to square waves of varying amplitude -- their \emph{amplitude response}.  The output of an LTI system  to a square wave input is approximated by a square wave, to give an analog of the describing function.  The classical describing function method for predicting oscillations in feedback interconnections is generalized to this square wave setting, and gives accurate predictions when oscillations are approximately square.
\end{abstract}

\section{Introduction}

The fundamental property of an LTI system which allows frequency domain analysis is that it maps a sinusoidal input to a sinusoidal output.  This allows the behavior of the system to be characterized in terms of the gain and phase shift it applies to a sinusoidal input, and gives rise to many of the foundational tools of control theory, among them transfer functions, the Nyquist diagram, Nyquist stability criterion and the Bode diagram.

As soon as a system becomes nonlinear, its response to a sinusoidal input is no longer sinusoidal, and frequency domain tools can no longer be applied directly.  Efforts to extend frequency domain analysis to nonlinear systems in a rigorous manner have played a major part in the development of absolute stability theory \cite{Liberzon2006}.  This theory relies on the separation of a system into two components connected in feedback: one component which is LTI, and therefore lends itself to frequency domain analysis, and another component containing any nonlinearities and other troublesome elements.  Such a separation is called a Lur'e system.  

A fruitful heuristic approach to study Lur'e systems is \emph{describing function analysis} \cite{Krylov1950}.  The troublesome nonlinear component is replaced by a component which maps sinusoids to sinusoids, by approximating its output by a sinusoid using a least-squares fitting.  Frequency domain tools can then be applied to the approximated feedback interconnection, and instability in the approximated system is often a good predictor of oscillation in the true system \cite{Slotine1991}.  Despite being approximate, the describing function method is a widely used method for nonlinear control design in practice \cite{Slotine1991, vandenEijnden2020a}.

The standard describing function approach relies on the output of the nonlinear component being approximately sinusoidal, so that its approximation by a sinusoid is reasonably accurate.  The accuracy of the approach can be quantified explicitly \cite{Bergen1971, Mees1975}, and there have been many efforts to extend describing function analysis to wider classes of systems and improve its accuracy, for example by allowing multiple inputs \cite{Gelb1968}, Gaussian inputs \cite{Booton1954}, generalized outputs \cite{Bertoni1969, Jopling1964}, incorporating higher order terms \cite{Yang1994, Nuij2006} and fitting the sinusoidal approximation with alternatives to least-squares \cite{Gibson1963, Prince1954}.  However, all of these extensions retain the basic philosophy that the nonlinear component's output should be approximately sinusoidal.

The question which motivates this paper is whether there exist other classes of signals and systems such that the signal class is preserved by the system, and an analogous approach to frequency domain analysis might be possible.  We begin with the observation that a static nonlinear function maps a square wave to another square wave, and proceed to develop an upside-down version of the describing function method: static nonlinear functions are the ``easy'' components, whose behavior is characterized by their response to square waves of varying amplitude; LTI systems are the ``hard'' components, whose output to a square wave input must be approximated by a square wave.  We develop these ideas into a method of predicting oscillations in a feedback interconnection which are approximately square, rather than approximately sinusoidal as predicted by the classical describing function method.  Square wave oscillations are common in electronic applications such as relaxation oscillators \cite{Slotine1991} and DC/DC power converters \cite{Sanders1993}.

The remainder of this paper is structured as follows.  We briefly introduce notation in Section~\ref{sec:prelims}, before giving a summary of the classical describing function method in Section~\ref{sec:recap}.  In Section~\ref{sec:square}, we develop an analog of the frequency response for static nonlinearities, which we call the \emph{amplitude response}, using square waves as inputs, and conclude the section with a square wave version of the Nyquist criterion.  Finally, in Section~\ref{sec:adf}, we introduce the amplitude describing function of an LTI system, and a method for predicting the existence of approximately square oscillations in feedback systems. Examples are given in Section~\ref{sec:examples}, and conclusions are drawn in Section~\ref{sec:conclusions}.

\section{Preliminaries and notation}\label{sec:prelims}

We let $\mathcal{B}$ denote the space of pointwise bounded signals $u: \R \to \R$.
We let $L_2(\mathbb{T}, \mathbb{F})$ denote the space of square integrable functions
mapping $\mathbb{T} \to \mathbb{F}$, where $\mathbb{F}$ is $\mathbb{C}$ or $\mathbb{R}$, and use the shorthand notation $\LT$ to denote $L_2([0,
T], \R)$.  These spaces are equipped with the standard inner product:
\begin{IEEEeqnarray*}{rCl}
\ip{u}{y} := \int_\mathbb{T} u(t)^* y(t) \dd{t},
\end{IEEEeqnarray*}
where $z^*$ denotes the conjugate transpose of $z$.
A signal $u \in \mathcal{B}$ is said to be \emph{periodic} if there exists some $T > 0$
such that $u(t) = u(t + T)$ for all $t \in \R$.
Periodic signals may be considered to belong to $\LT$, by restricting to a single
period.  Given a scalar $\tau \in [0, T]$, we define the
\emph{periodic $\tau$-delay} $P_\tau: \LT \to \LT$ by
\begin{IEEEeqnarray*}{rCl}
P_\tau u (t) = \begin{cases}
        u(t - \tau + T) & 0 \leq t < \tau\\
        u(t - \tau) & \tau \leq t \leq T.
\end{cases}
\end{IEEEeqnarray*}

An operator $N: \mathcal{B} \to \mathcal{B}$ is said to be \emph{periodicity
preserving} if $u \in \LT \implies N(u) \in \LT$ for all $T > 0$, in which case we can restrict to a single period to induce an operator on $\LT$.
We make the standing assumption that any operator $N: \mathcal{B} \to \mathcal{B}$ satisfies $N(\mathbf{0}) = \mathbf{0}$, where
$\mathbf{0}$ denotes the zero signal $u(t) = 0$ for all $t$.
Where there is no risk of ambiguity, we use $G$ to denote both an LTI operator and its transfer function representation. 
A similar approach is adopted for generic (nonlinear) operators:
$N$ denotes both the operator and its representation.

\section{The classical describing function method}\label{sec:recap}

We begin with a brief summary of the classical describing function method, which predicts the existence of self-sustaining oscillations in the negative feedback interconnection of an LTI system and a nonlinear operator, as shown in Figure~\ref{fig:df_fb}.  In general, proving the existence of such oscillations is a hard problem, and the describing function method is a heuristic, which tests for the existence of purely \emph{sinusoidal} oscillations in an approximation of the original system.  Given a sinusoidal input, the LTI system $G$ produces a sinusoidal output. For the nonlinear operator $N$, however, this is no longer the case: a sinusoidal input produces an output with, in general, many sinusoidal components at different frequencies. The describing function method approximates this output by a single frequency component, and searches for a purely sinusoidal oscillation in the feedback connection of this approximation with $G$.

\begin{figure}
    \centering
    \includegraphics{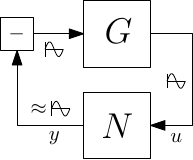}
    \caption{The describing function method predicts oscillations in the negative feedback interconnection of an LTI system $G$ with a nonlinear operator $N$, by approximating the output $y$ of the nonlinear element by a sinusoid.}
    \label{fig:df_fb}
\end{figure}

If $u \in \LT$ is defined by $u(t) = \alpha \cos(\omega t)$, $\omega =
(2\pi)/T$, and $y = N(u)$ for some nonlinear operator $N: \LT \to \LT$, we can approximate $y(t)$ by a sinusoid, by solving the following least-squares problem:
\begin{IEEEeqnarray*}{rCl}
    y(t) &\approx& \beta \cos(\omega t + \phi)\\
    (\beta, \phi) &=& \argmin_{\tilde \beta, \tilde \phi} \norm{\tilde \beta \cos(\omega  \cdot + \tilde \phi) - y(\cdot)}^2.\IEEEyesnumber \label{eq:approx}
\end{IEEEeqnarray*}

Expressing $u(t)$ as $(\alpha/2)(e^{j\omega t} + e^{-j\omega t})$, we have the map
\begin{IEEEeqnarray*}{rCl}
     \frac{\alpha}{2} e^{j\omega t} + \frac{\alpha}{2} e^{-j\omega t}  &\mapsto& \frac{\beta}{2} e^{j\phi} e^{j\omega t} + \frac{\beta}{2} e^{-j\phi} e^{-j\omega t},
\end{IEEEeqnarray*}
from which we obtain the transfer function

\begin{IEEEeqnarray*}{rCl}
    \frac{\alpha}{2} e^{j\omega t}  &\mapsto& \tilde N(\omega, \alpha) \frac{\alpha}{2} e^{j\omega t}\\
    \tilde N(\omega, \alpha) &:=&  \frac{\beta}{\alpha} e^{j\phi}.
\end{IEEEeqnarray*}

The function $\tilde N(\omega, \alpha)$ is called the \emph{describing function of
$N$}, and defines a mapping from sinusoids to sinusoids, illustrated in
Figure~\ref{fig:df}. 

\begin{figure}[hb]
    \centering
    \includegraphics[width=\linewidth]{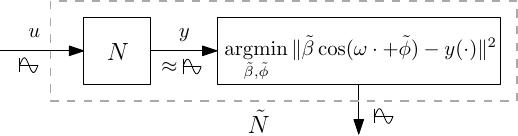}
    \caption{The describing function of $N$, denoted by $\tilde N$, defines a mapping
    from sinusoids to sinusoids by approximating the output of $N$ by a sinusoid.}
    \label{fig:df}
\end{figure}

We now have two operators which map a sinusoidal input to a sinusoidal output,
represented by the complex numbers $G(j\omega)$ and $\bar N(\omega, \alpha)$.  We can therefore solve for a sinusoidal solution 
$w(t) = \alpha e^{j\omega t}$ to the feedback equations
\begin{IEEEeqnarray*}{rCl}
    u(t) &=& G(j\omega) w(t)\\
    y(t) &=& \tilde N(\omega, \alpha) u(t)\\
    w(t) &=& - y(t).
\end{IEEEeqnarray*}
Simplifying these equations, we obtain the condition $G(j\omega)\tilde N(\omega,
\alpha) = -1$, which can be tested for graphically by plotting the loci $G(j\omega)$
and $-1/\tilde N(\omega, \alpha)$ on the complex plane and finding an intersection.
If $N$ is a static nonlinearity, this test is simplified, as $\tilde
N(\omega, \alpha)$ is then constant with respect to $\omega$, so $\tilde N(\alpha)$ can be plotted as a function of $\alpha$ alone.

The describing function is closely related to the Fourier series of the signal $y(t)$, which is the series expansion
\begin{IEEEeqnarray}{rCl}
    y(t) &=& \sum_{n = -\infty}^\infty c_n e^{-j2\pi nt/T},\label{eq:Fourier_series}
\end{IEEEeqnarray}
where the coefficients are given by
\begin{IEEEeqnarray}{rCl}
    c_n := \frac{1}{T} \int_0^T y(t) e^{-j2\pi nt/T} \dd{t}. \label{eq:Fourier_coefficients}
\end{IEEEeqnarray}
The following proposition states that the describing function approximation is equivalent to truncating the Fourier series of $y$ to the
$1$ and $-1$ terms.

\begin{proposition}\label{prop:Fourier}
    The approximation \eqref{eq:approx} is equal to the approximation
\begin{IEEEeqnarray}{rCl}
    y(t) &\approx& c_1 e^{-j2\pi t/T} + c_{-1} e^{-j2\pi
    t/T}.\label{eq:Fourier_approx}
\end{IEEEeqnarray}
\end{proposition}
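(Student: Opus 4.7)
The plan is to recast the least-squares problem \eqref{eq:approx} as an orthogonal projection onto a two-dimensional subspace of $\LT$, and then identify the projection coefficients with the Fourier coefficients $c_{\pm 1}$.

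First, I would parametrize the candidate approximations. Since $\tilde\beta \cos(\omega t + \tilde\phi) = \tilde\beta\cos\tilde\phi \cos(\omega t) - \tilde\beta\sin\tilde\phi \sin(\omega t)$, the set of candidate signals is precisely the two-dimensional real subspace
\begin{IEEEeqnarray*}{rCl}
V := \mathrm{span}_{\mathbb R}\{\cos(\omega \cdot),\, \sin(\omega \cdot)\} \subset \LT,
\end{IEEEeqnarray*}
which, after complexification, coincides with the real signals in $\mathrm{span}_{\mathbb C}\{e^{j\omega\cdot},\, e^{-j\omega\cdot}\}$. Thus the problem \eqref{eq:approx} is a minimum-norm approximation of $y$ by an element of $V$.

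Next, by the Hilbert-space projection theorem, the unique minimizer is the orthogonal projection $P_V y$ of $y$ onto $V$. Using the standard fact that $\{e^{-j2\pi nt/T}\}_{n \in \mathbb Z}$ is an orthogonal basis of $\LT$ (complexified) with $\langle e^{-j2\pi mt/T}, e^{-j2\pi nt/T}\rangle = T\delta_{mn}$, and comparing with \eqref{eq:Fourier_coefficients}, the expansion coefficients of $y$ in this basis are exactly the $c_n$. The projection onto the two-dimensional subspace spanned by the $n = \pm 1$ basis vectors is therefore the right-hand side of \eqref{eq:Fourier_approx}.

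Finally I would verify that this projection is real-valued and so actually lies in $V$: because $y$ is real, $c_{-1} = \overline{c_1}$, which makes $c_1 e^{-j2\pi t/T} + c_{-1} e^{j2\pi t/T}$ a real linear combination of $\cos(\omega t)$ and $\sin(\omega t)$, hence an element of $V$ and thus a valid candidate for \eqref{eq:approx}. Since orthogonal projections onto subspaces containing the minimizer agree, the two approximations coincide. The only real obstacle is bookkeeping around the sign convention used in \eqref{eq:Fourier_series}--\eqref{eq:Fourier_coefficients}; there is no substantive analytic difficulty.
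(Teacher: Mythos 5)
Your proposal is correct and follows essentially the same route as the paper's proof: both invoke the Hilbert-space projection theorem for the subspace spanned by $e^{\pm j\omega t}$ and use the conjugacy relation $c_{-1} = c_1^*$ (equivalently, realness of $y$) to identify the complex projection with the real sinusoidal fit in \eqref{eq:approx}. Your extra step of checking that the complex projection is real-valued and hence lies in the real span of $\cos(\omega\cdot)$ and $\sin(\omega\cdot)$ is just a slightly more explicit packaging of the paper's reduction of the minimization to a single complex coefficient.
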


\begin{proof}
It follows directly from \eqref{eq:Fourier_series} and
\eqref{eq:Fourier_coefficients} that \eqref{eq:Fourier_approx} is the projection of $y(t)$ onto the subspace of $L_2(\mathbb{C}, \R)$ spanned by $(e^{j\omega t}, e^{-j\omega t})$.  It follows from the projection theorem \cite[Thm. 1]{Luenberger1969} that 
    \begin{IEEEeqnarray*}{rCl}
        (c_1, c_{-1}) &=& \argmin_{\tilde c_1, \tilde c_{-1} \in \C} \norm{\tilde c_1 e^{j\omega \cdot} + \tilde c_{-1} e^{-j\omega \cdot} - y(\cdot)}^2.
    \end{IEEEeqnarray*}
    We also note from \eqref{eq:Fourier_coefficients} that $c_{-1} = c_1^*$, so the minimization reduces to
    \begin{IEEEeqnarray*}{rCl}
        c_1 &=& \argmin_{\tilde c_1 \in \C} \norm{\tilde c_1 e^{j\omega \cdot} + \tilde{c}_{1}^* e^{-j\omega \cdot} - y(\cdot)}^2.
    \end{IEEEeqnarray*}
    Expressing $c_1 = \beta e^{j\phi}$, this further simplifies to give
    \eqref{eq:approx}:
    \begin{IEEEeqnarray*}{+rCl+x*}
        (\beta, \phi) &=& \argmin_{\tilde \beta e^{j\tilde \phi} \in \C} \norm{\tilde \beta \cos(\omega \cdot + \tilde \phi) - y(\cdot)}^2.& \qedhere
    \end{IEEEeqnarray*}
\end{proof}

\section{The amplitude response of a nonlinearity}\label{sec:square}

In this paper, we revisit the describing function method by using a square wave 
approximation in place of the sinusoidal approximation \eqref{eq:approx}.  First,
however, we examine the class of systems which map square waves to square waves:
these take the role of the LTI system in the classical describing function setting of
Figure~\ref{fig:df_fb}.

The set of square waves is characterized by a base signal of period $T$, defined as follows: 
\begin{IEEEeqnarray*}{rCl}
\sq(t) &:=& \begin{cases} \frac{1}{T} & 0 \leq t < \frac{T}{2}\\
    -\frac{1}{T} & \frac{T}{2} \leq t < T.
                       \end{cases}
\end{IEEEeqnarray*}
Analysis will be performed for square waves of a given period $T$.  The set of square
waves of a given period is formed by changing the amplitude and phase of $\sq$.
We denote the phase delayed signal $P_\tau \sq$ by $\sq_\tau$, for $\tau \in [0, T]$.
The signal $\sq_\tau$ is illustrated in Figure~\ref{fig:square}.  Note that
$\norm{\sq_\tau} = 1$ and $\int_0^T \sq_\tau(t) \dd{t} = 0$ for all $T, \tau$.
\begin{figure}[hb]
    \centering
\begin{tikzpicture}[scale=0.8]
    \begin{axis}[
        xlabel={$t$},
        ylabel={$y(t)$},
        height=0.4\linewidth,
        width=0.8\linewidth,
        grid=both,
        xtick={0, 0.25, 0.75, 1},
        xticklabels={$0$, $\tau$, $\tau + \frac{T}{2}$, $T$},
        ytick={-1, 0, 1},
        yticklabels={$-\frac{1}{T}$, $0$, $\frac{1}{T}$},
        xmax=1.0, xmin=0.0]
        
    \addplot[domain=0:0.25, samples=2, black] {-1}; 
    \addplot[excl] coordinates {(0.25, -1)};
    \addplot[domain=0.25:0.75, samples=2, black] {1}; 
    \addplot[incl] coordinates {(0.25, 1)};
    \addplot[excl] coordinates {(0.75, 1)};
    \addplot[domain=0.75:1.0, samples=2, black] {-1}; 
    \addplot[incl] coordinates {(0.75, -1)};
    \end{axis}
\end{tikzpicture}
\caption{The signal $\sq_\tau$.}\label{fig:square}
\end{figure}

The crucial property of LTI systems that enables frequency domain analysis is that
sinusoids are eigenfunctions: an LTI system  maps a sinusoid to
another sinusoid, with a gain and phase shift depending on the frequency of the
input.  In this paper, we work with systems which enjoy a similar property, with
respect to the set of square waves.  This is formalized as follows.

\begin{definition}\label{def:square_preserving}
We say that a periodicity preserving
operator $\mathcal{N}: \mathcal{B} \to \mathcal{B}$ is \emph{square-preserving} if, for all
$\alpha \in \R$, $T > 0$, and $\tau \in [0, T]$, there exists $N(\alpha,T) \in \mathbb{C}$  such that
\begin{equation*}
\mathcal{N}(\alpha \sq_\tau) = |N(\alpha, T)||\alpha|
\sq_{\tau + \frac{T}{2\pi}\angle N(\alpha, T)}.\qedhere
\end{equation*}
\end{definition}

The first example of a square-preserving system is an odd static nonlinearity, as shown in the following theorem.

\begin{theorem}\label{thm:static_NL}
    Let $\Phi$ be a static nonlinearity which is odd, that is, $-\Phi(u) = \Phi(-u)$.
    Then
    \begin{IEEEeqnarray*}{rCl}
    \Phi(\alpha\sq_\tau) = |\Phi(\alpha)|\sq_{\tau+(T/4)(1- \sign(\Phi(\alpha)))}.
    \end{IEEEeqnarray*}
\end{theorem}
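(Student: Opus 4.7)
The plan is to exploit two elementary facts: that $\alpha\sq_\tau$ is a two-valued signal whose values differ only by sign, and that on the set of square waves of period $T$ the operation of negation coincides with the periodic $T/2$-delay.

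First I would observe that the signal $\alpha\sq_\tau$ takes only the two values $\pm\alpha/T$, and that on each $T$-periodic interval it is constant between its jumps. Since $\Phi$ is applied pointwise, the output $\Phi(\alpha\sq_\tau)$ is therefore also constant on the same intervals and takes only the two values $\Phi(\alpha/T)$ and $\Phi(-\alpha/T)$. Invoking oddness, $\Phi(-\alpha/T) = -\Phi(\alpha/T)$, so $\Phi(\alpha\sq_\tau)$ is itself a (scaled) square wave, with amplitude $|\Phi(\alpha)|$ and switching instants inherited from $\alpha\sq_\tau$.

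Next I would write $\Phi(\alpha\sq_\tau) = \Phi(\alpha)\cdot T\sq_\tau$ (using $\sq_\tau \in \{\pm 1/T\}$), and split into two cases according to $\sign(\Phi(\alpha))$. If $\Phi(\alpha) > 0$, the right-hand side equals $|\Phi(\alpha)|\sq_\tau$, matching the claim because $(T/4)(1-\sign(\Phi(\alpha))) = 0$. If $\Phi(\alpha) < 0$, I need the key identity that $-\sq_\tau = \sq_{\tau + T/2}$, which follows directly from the definition of $\sq$ and of $P_\tau$: a sign flip of $\sq_\tau$ exchanges the intervals on which it is $+1/T$ and $-1/T$, which is precisely the action of $P_{T/2}$. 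This gives $\Phi(\alpha\sq_\tau) = |\Phi(\alpha)|\sq_{\tau + T/2}$, matching $(T/4)(1-\sign(\Phi(\alpha))) = T/2$.

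Finally I would combine the two cases into the single expression $\tau + (T/4)(1-\sign(\Phi(\alpha)))$ to conclude. There is no real obstacle here; the only point that deserves care is verifying the negation/shift identity $-\sq_\tau = \sq_{\tau + T/2}$ (modulo $T$) directly from the piecewise definition of $P_\tau$, and checking that the $\sign$ convention produces $0$ or $T/2$ at the right branches. The edge case $\Phi(\alpha) = 0$ can be absorbed either by using the convention $\sign(0) = 0$ or by noting that both sides vanish identically.
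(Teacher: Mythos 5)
Your proposal is correct and follows essentially the same route as the paper: apply $\Phi$ pointwise to the two-valued signal, use oddness to write the output as $\Phi(\alpha)\sq_\tau$, and convert the sign of $\Phi(\alpha)$ into the phase shift $(T/4)(1-\sign(\Phi(\alpha)))$ via the identity $-\sq_\tau=\sq_{\tau+T/2}$ (the paper does the $\tau=0$ case and declares general $\tau$ similar, which you handle directly). The only wrinkle is your intermediate line $\Phi(\alpha\sq_\tau)=\Phi(\alpha)\cdot T\sq_\tau$, whose stray factor of $T$ is inconsistent with your own conclusion $|\Phi(\alpha)|\sq_\tau$; this is just the same $\pm 1$ versus $\pm 1/T$ normalization looseness already present in the paper's statement and proof, so it is not a substantive gap.
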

\begin{proof}
We show the case $\tau=0$. The case $\tau \neq 0$ is similar.
    Applying $\Phi$ to $\alpha\sq$ gives
    \begin{IEEEeqnarray*}{+rCl+x*}
    \Phi(\alpha\sq) &=& \begin{cases} \Phi(\alpha) & 0 \leq t < \frac{T}{2}\\
                           \Phi(-\alpha) & \frac{T}{2} \leq t < T
                       \end{cases}&\\
               &=& \begin{cases} \Phi(\alpha) & 0 \leq t < \frac{T}{2}\\
                           -\Phi(\alpha) & \frac{T}{2} \leq t < T
                       \end{cases}&\\
               &=& \Phi(\alpha)\sq&\\
   &=& |\Phi(\alpha)|\sq_{(T/4)(1- \sign(\Phi(\alpha) )}. & \qedhere
    \end{IEEEeqnarray*}
\end{proof}

The Nyquist diagram of an LTI system plots output gain and phase shift as a function of the frequency of an input sinusoid.  In a similar manner, we can input a square wave to a square-preserving nonlinearity and plot the gain and phase shift of the output, to produce an \emph{amplitude Nyquist diagram}.  We define this formally as follows.

\begin{definition}
    Let $N: \mathcal{B} \to \mathcal{B}$ be a square-preserving operator. Then the \emph{amplitude Nyquist diagram} of $N$ at period $T > 0$ is the region of the complex plane defined by
    \begin{IEEEeqnarray*}{+rCl+x*}
         \nyq{N} &:=& \left\{|N(\alpha, T)|e^{j\angle N(\alpha, T)}\;\middle| \; \alpha \in \R \right\}.& \qedhere
    \end{IEEEeqnarray*}
\end{definition}

We can also plot an amplitude Bode diagram, by plotting the gain and phase of the amplitude Nyquist diagram as functions of $\alpha$. 

\begin{example}
    Consider $y(t) = \sat{ku(t)}$, where $k>0$ and
    \begin{IEEEeqnarray}{rCl}
    \sat{x} &=& \begin{cases} x & |x| \leq 1\\
        |x|/x & \text{ otherwise.}
    \end{cases}\label{eq:sat}
    \end{IEEEeqnarray}
    This operator is square-preserving by Theorem~\ref{thm:static_NL}.
    Amplitude Nyquist and Bode diagrams for this operator are shown in
    Figure~\ref{fig:sat_nyqa}.  In this case, the amplitude Nyquist diagram turns out to be identical to the Nyquist diagram of the operator's regular describing function, although this is in general not the case.  The saturation is \emph{low-pass} in an amplitude
    sense, allowing low amplitude signals to pass unattenuated, but attenuating large
    amplitude signals.  In the limit $k \to \infty$, we obtain the ideal relay, $y(t)
    = \sign(u(t))$, which may be thought of as the amplitude equivalent of an
    integrator, or ideal low-pass element.
\end{example}

\begin{example}\label{ex:delay}
    The amplitude-dependent delay $D: \mathcal{B} \to \mathcal{B}$, defined by
    \begin{align*}
        u(t) &\mapsto u(t - \gamma(t)),\\
        \gamma(t) &= \max_{\nu < t}|u(\nu)|,
    \end{align*} 
    is square-preserving, mapping $\alpha \sq$ to $\alpha \sq_{|\alpha|}$.  Its
    amplitude Nyquist diagram is therefore the unit circle, for any $T$.
\end{example}

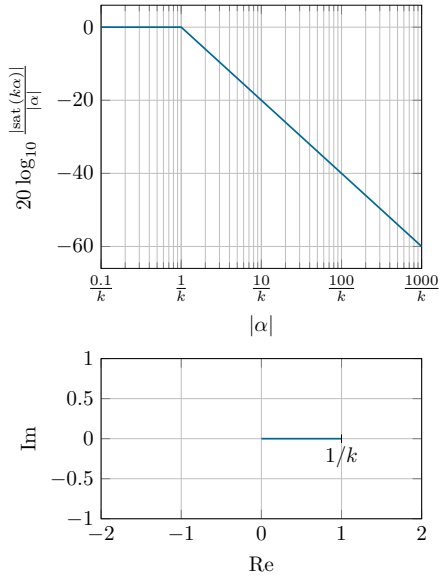
\begin{figure}[h]
    \centering
\begin{tikzpicture}[
    declare function={
        sat(\x) = (\x <= 1) * 1 + (\x > 1) * (1/\x);
    }, scale=0.8
    ]
\begin{groupplot}[
  group style={
    group size=1 by 2,
    y descriptions at=edge left,
    vertical sep=15mm
}, width=0.8\linewidth]
\nextgroupplot[
    domain=0:5,
    ylabel=$20\log_{10}\frac{|\sat{k \alpha}|}{|\alpha|}$,
    xlabel=$|\alpha|$,
    grid=both,
    xtick={0.01, 0.1, 1, 10, 100, 1000},
    xticklabels={$\frac{0.01}{k}$, $\frac{0.1}{k}$, $\frac{1}{k}$, $\frac{10}{k}$, $\frac{100}{k}$, $\frac{1000}{k}$},
    xmin=0.1, xmax=1000,
    xmode=log
]
\addplot[MidnightBlue, thick, no marks, samples at={0.01, 0.1, 1, 10, 100, 1000}] {20*log10(sat(x))};

\nextgroupplot[
        ylabel={$\Im$},
        xlabel={$\Re$},
        grid=both,
        xmin=-2.0, xmax=2.0,
        ymin=-1.0, ymax=1.0,
        legend pos=south east, 
        axis equal image
    ]

    \node [below] at (axis cs:  1, 0) {$1/k$};
    \addplot[only marks, mark=|] coordinates {
        (1, 0)}; 
    
    \addplot[MidnightBlue, thick, no marks] coordinates {(0, 0) (1, 0)};

\end{groupplot}
\end{tikzpicture}
\caption{
Amplitude Bode diagram (above, gain only) and amplitude Nyquist diagram
(below) of $\sat{ku(\cdot)}$ for $k > 0$.}

\label{fig:sat_nyqa}
\end{figure}

We conclude this section by giving a theorem, reminiscent of the classical Nyquist
criterion, that allows us to establish properties of the negative feedback interconnection
\begin{IEEEeqnarray}{rCl}
    e &=& r - y \label{eq:unity_gain_1}\\
    y &=& N(e)\label{eq:unity_gain_2}
\end{IEEEeqnarray}
from the amplitude Nyquist diagram of $N$.

\begin{theorem}
Let $N: \mathcal{B} \to \mathcal{B}$ be square-preserving and fix $T > 0$.  If the
amplitude Nyquist diagram of $N$ at period $T$ contains the point $-1$ for a value of
$\alpha \neq 0$, the feedback loop defined by
\eqref{eq:unity_gain_1}--\eqref{eq:unity_gain_2} admits a self-sustaining square wave
oscillation.
\end{theorem}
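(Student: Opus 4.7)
The plan is to produce an explicit square wave that satisfies the feedback equations \eqref{eq:unity_gain_1}--\eqref{eq:unity_gain_2} with $r \equiv 0$, reading its amplitude off the hypothesis. This parallels the classical Nyquist setting, where the condition $G(j\omega)\tilde N(\omega,\alpha) = -1$ directly pins down the amplitude and frequency of the predicted limit cycle.

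First I would unpack the hypothesis: the condition $-1 \in \nyq{N}$ at period $T$ gives some $\alpha \neq 0$ with $|N(\alpha,T)| = 1$ and $\angle N(\alpha,T) = \pi$. We may take $\alpha > 0$ without loss of generality, since a square wave of negative amplitude can be re-expressed as a positive-amplitude square wave with phase shifted by $T/2$. Second, I would propose the candidate $e(t) = \alpha\,\sq(t)$ and apply Definition~\ref{def:square_preserving} directly to compute
\[
    y = \mathcal{N}(e) = |N(\alpha,T)|\,|\alpha|\,\sq_{(T/(2\pi))\angle N(\alpha,T)} = \alpha\,\sq_{T/2}.
\]
The elementary identity $\sq_{T/2} = -\sq$, which follows immediately from applying the definition of $P_{T/2}$ separately on $[0, T/2)$ and $[T/2, T]$, then gives $y = -\alpha\,\sq = -e$. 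Setting $r \equiv 0$ in \eqref{eq:unity_gain_1}--\eqref{eq:unity_gain_2} reduces the feedback equations to $e = -y$ and $y = \mathcal{N}(e)$, both of which hold by construction, so $e(t) = \alpha\,\sq(t)$ is a self-sustaining square wave oscillation of period $T$.

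There is no substantive obstacle; the proof is direct substitution, with the Nyquist-type condition $N(\alpha, T) = -1$ tuned precisely to make $\alpha\,\sq$ a fixed point of $-\mathcal{N}$. The only minor points requiring care are the sign bookkeeping on $\alpha$ and the half-period inversion identity $\sq_{T/2} = -\sq$.
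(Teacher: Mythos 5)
Your proof takes exactly the paper's route: set $r=\mathbf{0}$, take the $\alpha\neq 0$ witnessing $-1\in\nyq{N}$, substitute $e=\alpha\sq$, and use Definition~\ref{def:square_preserving} together with the identity $\sq_{T/2}=-\sq$ to conclude $y=-e$. For $\alpha>0$ this is correct and coincides with the paper's argument step for step.

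The one step you add --- ``without loss of generality $\alpha>0$'' --- is not justified by the reason you give, and this is worth understanding. Rewriting $\alpha\sq$ (with $\alpha<0$) as $|\alpha|\sq_{T/2}$ is a valid identity between signals, but it does not preserve the hypothesis: since $\alpha\sq_\tau=|\alpha|\sq_{\tau+T/2}$ and $\mathcal{N}$ is a single-valued map, Definition~\ref{def:square_preserving} forces $N(\alpha,T)=-N(|\alpha|,T)$ whenever these are nonzero. So if $-1$ is attained only at a negative $\alpha$, the positive-amplitude reparameterization sees $+1$, not $-1$, and direct substitution gives $y=|N(\alpha,T)|\,|\alpha|\,\sq_{T/2}=-|\alpha|\sq=\alpha\sq=e$ rather than $-e$; your line $y=\alpha\sq_{T/2}$ silently used $|\alpha|=\alpha$. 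The case genuinely cannot be recovered by a phase shift: for the identity operator one has $N(\alpha,T)=-1$ for all $\alpha<0$, so $-1\in\nyq{N}$, yet negative unity feedback around the identity admits only the zero solution. That said, the paper's own proof has the same blind spot (it asserts $y=-e$ from $y=|N(\alpha,T)|\,|\alpha|\,\sq_{T/2}$, which likewise requires $\alpha>0$), so your argument is exactly as strong as the published one; just be aware that your ``WLOG'' does not actually dispose of the negative-$\alpha$ case, which instead calls for reading the hypothesis as $-1$ being attained at some $\alpha>0$.
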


\begin{proof}
    Set $r = \mathbf{0}$.  We have $-1 \in \nyq{N}$,
    so there exists $\alpha \neq 0$ such that, setting $e = \alpha \sq$, we have $y =
    |N(\alpha, T)||\alpha|\sq_{\tau}$ with
    \begin{IEEEeqnarray*}{rCl}
        |N(\alpha, T)| = 1, \quad
        \tau &=& \frac{T}{2} + kT, \quad k \in \mathbb{Z}.
        \end{IEEEeqnarray*}
    This implies $y = -e$.  Substituting in \eqref{eq:unity_gain_1} gives $r =
    \mathbf{0}$.  Therefore, \eqref{eq:unity_gain_1}--\eqref{eq:unity_gain_2} admit
    at square wave solution with $r = \mathbf{0}$. 
\end{proof}

\section{Amplitude describing functions}\label{sec:adf}

In Section~\ref{sec:square}, we examined systems which map square
waves to square waves, and the stability of such systems in feedback.  In this
section, we look at systems which do not preserve square waves, and develop an
analog of the classical describing function which approximates such systems by
square-preserving systems.  We then develop a graphical criterion for predicting the
existence of square wave oscillations in feedback interconnections.

\subsection{The amplitude describing function of an LTI system}

Given a period-preserving operator $N: \mathcal{B} \to \mathcal{B}$ which is not
square-preserving, we can approximate $N$ by a square-preserving operator called the
\emph{amplitude describing function} of $N$, by approximating $y = N(\alpha \sq)$ by
a square wave using a least-squares fitting, as illustrated in Figure~\ref{fig:adf}.  We formalize this as follows.

\begin{figure}[hb]
    \centering
    \includegraphics{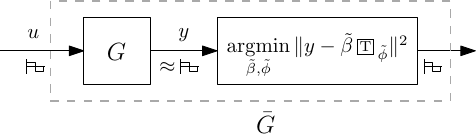}
    \caption{The amplitude describing function of $G$, denoted by $\bar G$, defines a
        square-preserving mapping
    approximating the output of $G$ by a square wave.}
    \label{fig:adf}
\end{figure}

\begin{definition}\label{def:adf}
     Given an operator $N: \LT \to \LT$, its \emph{amplitude describing function} is the operator $\bar N(T, \alpha): \LT \to \LT$ defined by
     \begin{IEEEeqnarray*}{+rCl+x*}
        \bar N(T, \alpha)(\alpha \sq) &:=& \beta \alpha \sq_{\frac{T}{2\pi}\phi}&\\
        (\beta, \phi) &:=& \argmin_{\tilde \beta \geq 0, \tilde \phi \in [0, 2\pi]} \norm{N(\alpha \sq) - \tilde \beta\alpha \sq_{\tilde \phi}}^2.&\qedhere
     \end{IEEEeqnarray*}
\end{definition}
As with the classical describing function, the amplitude describing function can be encoded as a complex number $\beta e^{j\phi}$, and we will abuse notation and use $\bar N(T, \alpha)$ to denote both the describing function operator and its complex number representation.
Proposition~\ref{prop:Fourier} characterizes the regular describing function in terms
of the Fourier transform.  The following theorem gives an
analogous result for the amplitude describing function.
\begin{theorem}\label{thm:square_transform}
     Given an operator $N: \LT \to \LT$, $\alpha \in \Ro$, $T > 0$, let $y = N(\alpha \sq)$.  Then 
     $\bar N(T, \alpha) = \beta e^{j\phi}$, where
      \begin{IEEEeqnarray*}{rCl}
        \phi &=& \argmin\limits_{\tilde{\phi} \in [0, \pi]} 
            -\ip{y}{\sq_{\frac{T}{2\pi}\tilde\phi}}^2 \IEEEyesnumber\label{eq:opt}\\
        \beta &=& \frac{1}{\alpha}\ip{y}{\sq_{\frac{T}{2\pi}\phi}}.
     \end{IEEEeqnarray*}    
\end{theorem}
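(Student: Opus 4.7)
The plan is to reduce the two-variable optimization in Definition~\ref{def:adf} to the one-variable problem in the theorem, by first eliminating $\tilde\beta$ in closed form and then collapsing the phase domain from $[0, 2\pi]$ to $[0, \pi]$ using the half-period antisymmetry $\sq_{\tau + T/2} = -\sq_\tau$.

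First I would expand the squared norm, using the fact (noted just after the definition of $\sq$) that $\norm{\sq_\tau} = 1$, to obtain
\begin{IEEEeqnarray*}{rCl}
\norm{y - \tilde\beta\alpha\sq_{\frac{T}{2\pi}\tilde\phi}}^2 &=& \norm{y}^2 - 2\tilde\beta\alpha\ip{y}{\sq_{\frac{T}{2\pi}\tilde\phi}} + \tilde\beta^2\alpha^2.
\end{IEEEeqnarray*}
For each fixed $\tilde\phi$ this is a strictly convex quadratic in $\tilde\beta$; since $\alpha \in \Ro$, the unconstrained minimizer is $\tilde\beta^* = \ip{y}{\sq_{\frac{T}{2\pi}\tilde\phi}}/\alpha$, and the minimum value is $\norm{y}^2 - \ip{y}{\sq_{\frac{T}{2\pi}\tilde\phi}}^2$.

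Next I would reconcile the constraint $\tilde\beta \geq 0$ in Definition~\ref{def:adf} with the unconstrained minimization suggested by the formula above. The key observation is that $\sq_{\tau + T/2} = -\sq_\tau$, which in angle variables reads $\sq_{\frac{T}{2\pi}(\tilde\phi + \pi)} = -\sq_{\frac{T}{2\pi}\tilde\phi}$, so the transformation $(\tilde\beta, \tilde\phi) \mapsto (-\tilde\beta, \tilde\phi + \pi)$ leaves both the fitted signal $\tilde\beta\alpha\sq_{\frac{T}{2\pi}\tilde\phi}$ and the complex number $\tilde\beta e^{j\tilde\phi}$ invariant. Hence optimizing over $\{\tilde\beta \geq 0,\,\tilde\phi \in [0, 2\pi]\}$ yields the same minimum value and the same optimal complex-number representation as optimizing over $\{\tilde\beta \in \R,\, \tilde\phi \in [0, \pi]\}$: within each orbit $\{\tilde\phi, \tilde\phi + \pi\}$ in $[0, 2\pi]$, exactly one representative has the sign of $\tilde\beta^*$ consistent with $\tilde\beta \geq 0$, and both represent the same $\bar N(T, \alpha)$.

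Putting the two steps together, minimizing over $\tilde\phi \in [0, \pi]$ reduces to maximizing $\ip{y}{\sq_{\frac{T}{2\pi}\tilde\phi}}^2$, which is precisely the formula for $\phi$ in the theorem; the corresponding $\beta = \ip{y}{\sq_{\frac{T}{2\pi}\phi}}/\alpha$ then comes from the fixed-$\tilde\phi$ minimizer computed in the first step. I expect the main obstacle to be the phase-domain collapse in the second step, since it is where the nonnegativity constraint must be absorbed into a complex-number identification via the antisymmetry of $\sq$; the remaining calculations are a routine Hilbert-space projection onto the one-dimensional subspace spanned by $\sq_{\frac{T}{2\pi}\tilde\phi}$.
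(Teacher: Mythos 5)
Your proposal is correct and follows essentially the same route as the paper: eliminate the amplitude by projecting $y$ onto the one-dimensional subspace spanned by $\sq_{\frac{T}{2\pi}\tilde\phi}$ (your explicit quadratic-in-$\tilde\beta$ expansion is the same computation the paper performs via the projection theorem), then use $\sq_{\tau+T/2} = -\sq_\tau$ to trade the constraint $\tilde\beta \geq 0$ over $\tilde\phi \in [0,2\pi]$ for a signed amplitude over $\tilde\phi \in [0,\pi]$, leaving $\min_{\tilde\phi} -\ip{y}{\sq_{\frac{T}{2\pi}\tilde\phi}}^2$. If anything, your handling of the sign/phase identification $(\tilde\beta,\tilde\phi)\mapsto(-\tilde\beta,\tilde\phi+\pi)$ is spelled out more explicitly than in the paper, which states this reduction in one sentence.
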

\begin{proof}
 
    Define $\tau = \frac{T}{2\pi}\phi$ and note that 
    $\phi \in [0,\pi]$ implies $\tau\in [0, T/2]$.  We further note that, for $\tau >
    T/2$, $\sq_{\tau} = -\sq_{\tau - T/2}$, allowing us to restrict $\tau$ to $[0,
    T/2]$ by allowing $\beta$ to be negative. For each $\tau \in [0, T]$,
    $\sq_{\tau}$ has unit norm and spans a one-dimensional subspace in $\LT$.  It then follows from the projection theorem \cite[Thm. 1]{Luenberger1969} that
    \begin{equation*}
        \argmin_{\tilde \beta} \norm{y - \tilde \beta\alpha \sq_{\tau}}^2 
        = \frac{1}{\alpha}\ip{y}{\sq_{\tau}}, 
    \end{equation*}
    allowing us to eliminate $\beta$ from the minimization in Definition~\ref{def:adf} to obtain
    \begin{IEEEeqnarray*}{+rCl+x*}
        &&\min_{\tau} \norm{y - \frac{1}{\alpha}\ip{y}{\sq_{\tau}}\alpha\sq_{\tau}}^2&\\
        &=&\min_{\tau} \norm{y}^2 + \norm{\ip{y}{\sq_{\tau}}\sq_{\tau}}^2 - 2\ip{y}{\ip{y}{\sq_{\tau}}\sq_{\tau}}&\\
        &=&\min_{\tau} \ip{y}{\sq_{\tau}}^2 - 2\ip{y}{\sq_{\tau}}^2&\\
        &=& \min_{\tau} -\ip{y}{\sq_{\tau}}^2.&\qedhere
    \end{IEEEeqnarray*}
\end{proof}

The minimization \eqref{eq:opt} appears expensive to compute, however Proposition~\ref{thm:transform_simple} in the appendix shows that the cost only has to be evalutated at a small number of test phase shifts.
This approximation is closely related to, but distinct from, the Haar wavelet transform with a periodic boundary condition \cite[$\S
7.5.1$]{Mallat2009} the square wave transform of \cite{Pender1992}, and the matched
filter \cite{Turin1960}.

In general, the amplitude describing function depends on both the input's amplitude
and period.  In the case of a linear system, however, the amplitude describing
function becomes independent of amplitude: $\bar G(T, \alpha_1) = \bar G(T,
\alpha_2)$ for all $T > 0$ and $\alpha_1, \alpha_2 \in \R\backslash \{0\}$.  This
follows directly from linearity.

%

\subsection{Harmonic Balance and the Extended Nyquist Criterion}

In this section, we describe a square wave analog of the classical describing
function method for predicting oscillations in the negative feedback interconnection
of an LTI system with an odd static nonlinearity, illustrated in Figure~\ref{fig:adf_fb} and
defined as follows:
\begin{IEEEeqnarray}{rCl}
    e(t) &=& -y(t)\label{eq:true_balance}\\
    y(t) &=& G(\Phi(e(t))).\label{eq:true_feedback}
\end{IEEEeqnarray}
We replace prediction of an oscillation in the true system, in general a hard
problem, with a heuristic: prediction of an oscillation in an approximate system.  The odd static nonlinearity $\Phi$ is square preserving by Theorem~\ref{thm:static_NL}.  We form an approximate system by replacing the LTI component with its amplitude
describing function.  
Both components in the system then map square waves to square waves, and we can write the feedback equations as follows:
\begin{IEEEeqnarray}{rCl}
    e(t) &=& -y(t)\label{eq:harmonic_balance}\\
    y(t) &=& \bar{G}(T)(\Phi(e(t))).\label{eq:feedback}
\end{IEEEeqnarray}
Fixing $T > 0$ and
letting $e(t) = \alpha \sq$ for some $\alpha \in \R\backslash\{0\}$, we can rewrite
the feedback equations to identify a condition for the existence of a self-sustaining
oscillation of period $T$.  We first note that $\Phi(\alpha\sq)(t) = \gamma(\alpha) \alpha
\sq_{\frac{T}{2\pi}\theta(\alpha)}(t)$ for some $\gamma(\alpha) > 0, \theta(\alpha) \in [0, 2\pi]$.
Likewise, $\bar{G}(T)(\gamma(\alpha) \alpha \sq_{\frac{T}{2\pi}\theta(\alpha)}) = \beta(T)
\gamma(\alpha) \alpha \sq_{\frac{T}{2\pi}(\phi(T) + \theta(\alpha))}$ for some $\beta(T) > 0, \phi(T) \in [0, 2\pi]$.  The harmonic balance condition \eqref{eq:harmonic_balance} then becomes
\begin{IEEEeqnarray}{rCl}
    \beta(T)\gamma(\alpha) = 1, \ \phi(T) + \theta(\alpha) = \pi + 2k\pi, \ k \in \mathbb{Z}.\label{eq:harmonic_balance_square}
\end{IEEEeqnarray}
Condition~\eqref{eq:harmonic_balance_square} can be tested graphically, in a method analogous to the extended Nyquist criterion used in standard describing function analysis.

\begin{figure}
    \centering
    \includegraphics{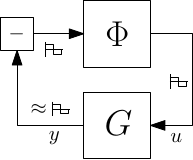}
    \caption{The amplitude describing function method predicts oscillations in the negative feedback interconnection of an LTI system $G$ with a nonlinear operator $\Phi$, but signals are approximated by square waves rather than sinusoids, and it is the output of the LTI component $G$ that must be approximated.}
    \label{fig:adf_fb}
\end{figure}

\begin{theorem}\label{thm:extended_Nyquist}
    Given $T > 0$, an LTI operator $G$ and an odd static nonlinearity $\Phi$, the feedback
    equations~\eqref{eq:harmonic_balance}--\eqref{eq:feedback} admit a solution $e(t)
    = \alpha \sq(t)$ for some $\alpha \in \R\backslash\{0\}$ if and only if 
    \begin{equation}\label{eq:extended_Nyquist}
    \nyq{\bar{G}(T)} \cap \frac{-1}{\nyq{\Phi}}  \neq \varnothing,
    \end{equation}
    where the operation $-1/\cdot$ is performed elementwise on the set $\nyq{\Phi}$.
\end{theorem}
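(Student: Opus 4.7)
The plan is to prove both directions by tracing the signal $e=\alpha\sq$ through the loop, reducing the functional equations~\eqref{eq:harmonic_balance}--\eqref{eq:feedback} to the algebraic harmonic balance condition~\eqref{eq:harmonic_balance_square}, and then recognizing that condition geometrically as the intersection~\eqref{eq:extended_Nyquist}.

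First, I would fix $\alpha \in \R\backslash\{0\}$, set $e = \alpha\sq$, and apply Theorem~\ref{thm:static_NL} to express $\Phi(\alpha\sq) = \gamma(\alpha)\alpha\sq_{\frac{T}{2\pi}\theta(\alpha)}$ for some $\gamma(\alpha) > 0$ and $\theta(\alpha) \in [0,2\pi]$; as $\alpha$ ranges over $\R$, the point $\gamma(\alpha)e^{j\theta(\alpha)}$ traces out $\nyq{\Phi}$. Next, since $\bar{G}(T)$ is square-preserving by construction of Definition~\ref{def:adf} and the underlying operator $G$ is LTI (so, as noted after Theorem~\ref{thm:square_transform}, the describing function gain $\beta(T)$ and phase $\phi(T)$ are amplitude-independent), I can apply $\bar{G}(T)$ to the shifted square wave above to obtain $y = \beta(T)\gamma(\alpha)\alpha\sq_{\frac{T}{2\pi}(\phi(T)+\theta(\alpha))}$. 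Here the single point $\beta(T)e^{j\phi(T)}$ constitutes $\nyq{\bar{G}(T)}$.

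Now substituting into $e = -y$, I would match magnitudes and phase shifts (with phase compared modulo $2\pi$, equivalently delays modulo $T$), which yields precisely~\eqref{eq:harmonic_balance_square}: $\beta(T)\gamma(\alpha) = 1$ and $\phi(T) + \theta(\alpha) \equiv \pi \pmod{2\pi}$. Rewriting this pair of scalar equations as the single complex equation $\beta(T)e^{j\phi(T)} = -1/(\gamma(\alpha)e^{j\theta(\alpha)})$ shows that it has a solution $\alpha \in \R\backslash\{0\}$ if and only if the single point of $\nyq{\bar{G}(T)}$ belongs to $-1/\nyq{\Phi}$, which is exactly~\eqref{eq:extended_Nyquist}. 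This establishes both implications simultaneously.

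The main obstacle is mostly bookkeeping rather than conceptual. One must carefully reconcile the factor $|\alpha|$ in Definition~\ref{def:square_preserving} with the $\alpha$ in the parametrization (handling the sign of $\alpha$ via a phase shift of $T/2$, as in Theorem~\ref{thm:static_NL}), and one must reduce delays modulo $T$ so that the $2k\pi$ ambiguity in~\eqref{eq:harmonic_balance_square} collapses consistently to a single equation in $\mathbb{C}$. Beyond these details, the proof is a direct signal-tracing argument: because each block in the approximate loop acts on the one-dimensional subspace spanned by $\sq_\tau$ by a simple complex gain, the infinite-dimensional fixed-point problem reduces to a scalar equation in $\mathbb{C}$, which is the describing-function paradigm replayed in the amplitude setting.
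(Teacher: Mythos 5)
Your proposal is correct and follows essentially the same route as the paper: trace $e=\alpha\sq$ through $\Phi$ and the amplitude-independent $\bar G(T)$, reduce $e=-y$ to the harmonic balance condition~\eqref{eq:harmonic_balance_square}, and recast that condition as the intersection~\eqref{eq:extended_Nyquist}. The only cosmetic difference is that the paper packages the last step as the set identity $\nyq{\bar G(T)\Phi} = \nyq{\bar G(T)}\,\nyq{\Phi}$ followed by a chain of equivalences with $-1$, whereas you write the same fact directly as the single complex equation $\beta(T)e^{j\phi(T)} = -1/\bigl(\gamma(\alpha)e^{j\theta(\alpha)}\bigr)$.
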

\begin{proof}
    We begin by noting that
    \begin{IEEEeqnarray}{rCl}
    \nyq{\bar G(T) \Phi} &=& \nyq{\bar G(T)} \nyq{\Phi}. \label{eq:product}
    \end{IEEEeqnarray}
    Indeed, fixing $\alpha \in \R\backslash\{0\}$, we have already noted above that
    $(\bar G (T) \Phi)(\alpha \sq) = \beta(T)\gamma(\alpha)\alpha\sq_{\frac{T}{2\pi}(\phi(T) +
    \theta(\alpha))}$.  The corresponding point on $\nyq{\bar G(T) \Phi}$ is therefore
    \begin{IEEEeqnarray*}{rCl}
        \beta(T)\gamma(\alpha)e^{j(\phi(T) + \theta(\alpha))} = \beta(T) e^{j\phi(T)}
        \gamma(\alpha) e^{j\theta(\alpha)},
    \end{IEEEeqnarray*}
    which is the product of the points corresponding to $\alpha$ in $\nyq{\bar G(T)}$
    (which is a single point, independent of $\alpha$) and $\nyq{\Phi}$.  Since
    $\alpha$ was chosen arbitrarily, \eqref{eq:product} follows.

    We now have the following chain of equivalences:
    \begin{IEEEeqnarray*}{+rCl+x*}
    \eqref{eq:extended_Nyquist} &\iff& -1 \in \nyq{\bar G(T)} \nyq{\Phi}&\\
                                &\iff& -1 \in \nyq{\bar G(T) \Phi}&\\
                                &\iff& \eqref{eq:harmonic_balance_square}.&\qedhere
    \end{IEEEeqnarray*}
\end{proof}

Theorem~\ref{thm:extended_Nyquist} gives a necessary and sufficient condition for the
existence of a square wave solution to
\eqref{eq:harmonic_balance}--\eqref{eq:feedback} \emph{of a particular period $T$}.
To search for a square wave solution of arbitrary period, we apply
Theorem~\ref{thm:extended_Nyquist} over a range of periods $T$.  Since $\nyq{\Phi}$
is independent of $T$, and $\nyq{\bar G (T)}$ is independent of $\alpha$ (and, for
each $T$, contains a single point), this amounts to plotting the locus $\nyq{\bar
G(T)}$ as a function of $T$ on the same axes as the locus $\nyq{\Phi}$ as a function
of $\alpha$, and searching for an intersection.  The value of $T$ which gives the
point $\nyq{\bar G(T)}$ at the intersection is the period of the square wave
solution, and the value of $\alpha$ which give the point in $\nyq{\Phi}$ is the amplitude of the square wave solution, measured at the output of $\bar G$.

In Theorem~\ref{thm:extended_Nyquist}, we have restricted ourselves to the case where $\Phi$ is static, meaning $\nyq{\Phi}$ is equal for all $T$.  The theorem extends in a straightforward manner to the case where this assumption is not met, and $\Phi$ is an arbitrary square-preserving nonlinearity.  In this case, \eqref{eq:harmonic_balance_square} becomes 
\begin{IEEEeqnarray*}{rCl}
    \beta(T)\gamma(T, \alpha) = 1, \  \phi(T) + \theta(T, \alpha) = \pi + 2k\pi, \ k \in \mathbb{Z},
\end{IEEEeqnarray*}
and we must plot the amplitude Nyquist diagram of $\Phi$ over both $T$ and $\alpha$ to find an intersection which satisfies these conditions.  This is demonstrated in Example~\ref{ex:sat_delay}.  
This more complicated case is analogous to the classical describing function method applied to nonlinearities with memory, in which case $\tilde N$ is a function both of frequency and amplitude.

Theorem~\ref{thm:extended_Nyquist} allows us to determine the existence, period and amplitude of oscillations in the system \eqref{eq:harmonic_balance}--\eqref{eq:feedback}.  However, this system is only an approximation of the true system, and the question remains whether an oscillatory solution to the approximate system is a good predictor of an oscillation in the true system.  For the classical describing function method, the prediction is reliable if the LTI system is sufficiently low-pass: the LTI system then filters out any higher order harmonics which have been neglected in the describing function approximation of the nonlinearity (an argument which is usually made intuitively \cite{Slotine1991}, but can be made precise \cite{Bergen1971, Mees1975}).  For amplitude describing functions, we conjecture that the dual is true: the prediction is reliable if the static nonlinearity is sufficiently low-pass in an amplitude sense, in which case the nonlinearity filters out any high amplitude components which have been neglected in the square wave approximation of the LTI system's output.  This is left as a conjecture, but demonstrated empirically in the following section.

\section{Examples}\label{sec:examples}

\begin{example}\label{ex_1}

In this example, we demonstrate the amplitude describing function method on a third order Goodwin oscillator \cite{Goodwin1965}, and compare its accuracy against the regular describing function method as the nonlinearity becomes increasingly low-pass in an amplitude sense.  The dynamics of the system are described by
\begin{IEEEeqnarray*}{rCl}\label{eq:example_1}
    y(s) &=& \frac{1}{(s+1)^3} e(s)\IEEEyesnumber\\
    u(t) &=& \sat{ky(t)},\\
    e(t) &=& - u(t),
\end{IEEEeqnarray*}
where $\operatorname{sat}$ is defined in \eqref{eq:sat}
and $k > 0$ is a variable gain which determines the steepness of the saturation.  The plots for the regular and amplitude describing function analyses are shown in Figure~\ref{fig:example_1}\footnote{Code for plotting the adf and Nyqa can be found at \texttt{https://github.com/ThomasChaffey/square-wave-\\describing-functions}}. Experimentally, a self-sustaining oscillation is present in the system for $k \geq 8$.  This is also the range predicted by the regular describing function method.  The amplitude describing function method predicts oscillations for $k \geq 9.53$.  Figure~\ref{fig:example_1_period} shows the period of oscillation as a function of $k$, as well as the period predicted by the regular and amplitude describing function methods.  For low $k$, the regular describing function method performs well.  As $k$ becomes larger, the saturation becomes increasingly low-pass in an amplitude sense, and the oscillation period predicted by the amplitude describing function becomes increasingly accurate.  Example oscillations are plotted for a range of $k$ values in Figure~\ref{fig:ex_1_oscillations}.
\end{example}

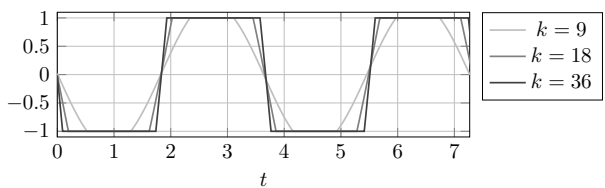
\begin{figure}[htbp]
    \centering
\begin{tikzpicture}[scale=0.8]
\begin{axis}[
        xlabel={$t$},
        grid=both,
        xmin=0, xmax=7.27,
        ymin=-1.1, ymax=1.1,
        no marks,
        legend pos=outer north east, 
        axis equal image
    ]
    
    \pgfplotstableread[col sep=comma]{figures/output.csv}\datatable
    \addplot[lightgray, thick] table[x=t, y=sat] {\datatable}; 
    \addlegendentry{$k=9$}
    \pgfplotstableread[col sep=comma]{figures/output3.csv}\datatablethree
    \addplot[gray, thick] table[x=t, y=sat] {\datatablethree}; 
    \addlegendentry{$k=18$}
    \pgfplotstableread[col sep=comma]{figures/output4.csv}\datatablefour
    \addplot[darkgray, thick] table[x=t, y=sat] {\datatablefour}; 
    \addlegendentry{$k=36$}
\end{axis}
\end{tikzpicture}
\caption{Oscillations in the feedback interconnection of Example~\ref{ex_1}.}
\label{fig:ex_1_oscillations}
\end{figure}
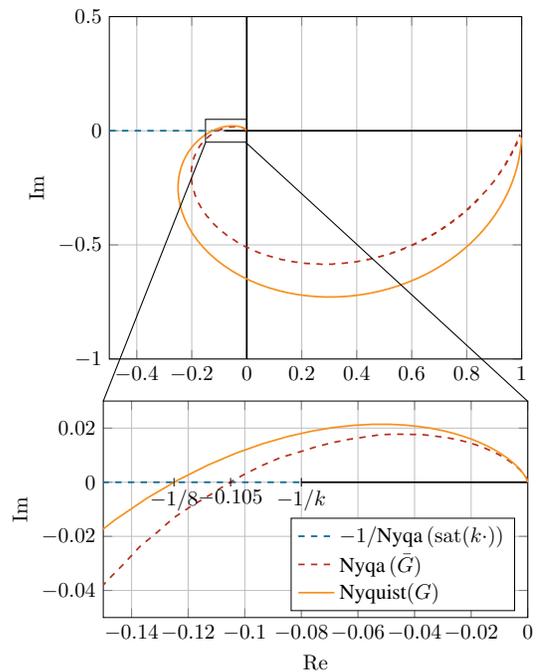
\begin{figure}[h]
\centering
\begin{tikzpicture}[scale=0.8]
\begin{groupplot}[
  group style={
    group size=1 by 2,
    y descriptions at=edge left,
    vertical sep=7mm
}]
    \nextgroupplot[
        ylabel={$\Im$},
        grid=both,
        xmin=-0.5, xmax=1.0,
        ymin=-1.0, ymax=0.5,
        no marks,
        legend pos=south east, 
    ]
    
    \addplot[black, thick] coordinates {(0, -1) (0, 0.5)};

    \addplot[black, thick] coordinates {(-0.08, 0) (1, 0)};

    \addplot[MidnightBlue, thick, dashed] coordinates {(-0.5, 0) (-0.08, 0)};

    \pgfplotstableread[col sep=comma]{example1adf.csv}\datatableone
    \addplot[BrickRed, dashed, thick] table[x=x, y=y] {\datatableone}; 

    \pgfplotstableread[col sep=comma]{example1nyq.csv}\datatabletwo
    \addplot[BurntOrange, thick] table[x=x2, y=y2] {\datatabletwo}; 

    \draw (-0.15,-0.05) rectangle (0,0.05);

    \nextgroupplot[
        xlabel={$\Re$},
        ylabel={$\Im$},
        grid=both,
        xmin=-0.15, xmax=0.0,
        ymin=-0.05, ymax=0.03,
        legend pos=south east, 
       yticklabel style={/pgf/number format/fixed},
       xticklabel style={/pgf/number format/fixed},
       scaled y ticks=false,
       legend cell align={left},
       width = \linewidth,
       height = 0.6\linewidth
    ]

    \addplot[MidnightBlue, thick, dashed, no marks] coordinates {(-0.5, 0) (-0.08, 0)};
    
    \addlegendentry{$-1/\nyq{\operatorname{sat}(k \cdot)}$}
    \addplot[black, thick, no marks, forget plot] coordinates {(-0.08, 0) (-0.0, 0)};
    \pgfplotstableread[col sep=comma]{example1adf.csv}\datatableone
    \addplot[BrickRed, dashed, no marks, thick] table[x=x, y=y] {\datatableone}; 
    \addlegendentry{$\nyq{\bar G}$};

    \pgfplotstableread[col sep=comma]{example1nyq.csv}\datatabletwo
    \addplot[BurntOrange, no marks, thick] table[x=x2, y=y2] {\datatabletwo}; 
    \addlegendentry{Nyquist$(G)$};

    \node [below] at (axis cs:  -0.08, 0) {$-1/k$};
    \node [below] at (axis cs:  -0.125, 0) {$-1/8$};
    \node [below] at (axis cs:  -0.105, 0) {$-0.105$};
    \addplot[only marks, mark=|] coordinates {
        (-0.08, 0)
        (-0.125, 0)
        (-0.105, 0)
    };
    \end{groupplot}
    

    \draw (1.6, 3.6) -- (-0.1, -0.7);
    \draw (2.27, 3.6) -- (6.95, -0.7);
\end{tikzpicture}
\caption{Nyquist diagram, amplitude Nyquist diagram and amplitude describing function plots for Example~\ref{ex_1}.}
\label{fig:example_1}
\end{figure}

\begin{figure}[htbp]
\centering
\begin{tikzpicture}[scale=0.8]
    \begin{axis}[
        xlabel={$k$},
        ylabel={$T$},
        grid=both,
        xmin=8, xmax=62, 
        ymin=3.62, ymax=3.69, 
                legend style={
            at={(1,0)}, 
            anchor=south east, 
            yshift=0.75cm, 
            xshift=-0.5cm
        },
        width=\linewidth,
        height=0.6\linewidth
    ]

    \addplot[only marks, mark=*] coordinates {
        (8,  10.898/3)
        (10, 10.910/3)
        (12, 10.925/3)
        (13, 10.941/3)
        (15, 10.960/3)
        (16, 10.966/3)
        (18, 10.979/3)
        (20, 10.992/3)
        (23, 11.000/3)
        (25, 11.009/3)
        (30, 11.016/3)
        (35, 11.023/3)
        (40, 11.020/3)
        (45, 11.025/3)
        (50, 11.021/3)
        (55, 11.022/3)
        (60, 11.022/3)
    };
    \addlegendentry{sim}

    \addplot[domain=8:62, samples=2, BurntOrange, thick] {3.632}; 
    \addlegendentry{df}

    \addplot[domain=8:62, samples=2, BrickRed, thick, dashed] {3.680}; 
    \addlegendentry{adf}

    \end{axis}
\end{tikzpicture}

\caption{Oscillation period in system \eqref{eq:example_1} for varying gain $k$ (sim), as well as periods predicted by the regular describing function (df) and amplitude describing function (adf).  Periods were measured by simulating the system with Runge-Kutta (4, 5) with a maximum step size of $0.0001$s in Simulink.}
\label{fig:example_1_period}
\end{figure}
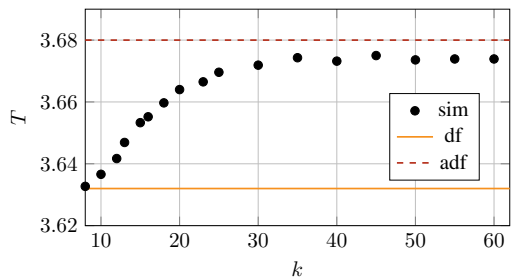

\begin{example}\label{ex:sat_delay}
    Let $D: \mathcal{B} \to \mathcal{B}$ be the amplitude-dependent delay defined in Example~\ref{ex:delay}, and consider the negative feedback interconnection shown in Figure~\ref{fig:adf_fb}, where $N: \mathcal{B} \to \mathcal{B}$ is defined as
    \begin{IEEEeqnarray*}{rCl}
         N(u)(t) &:=& \sat{D(u)}(t)\\
                 &=& \sat{u(t + \gamma(t))},\\
        \gamma(t) &=& \max_{\nu < t}|u(\nu)|,
    \end{IEEEeqnarray*}
    and $G(s) = k/(s+1)$.
    Given a square wave input $\alpha\sq$, the gain of this operator is $|\sat{\alpha}/\alpha|$, and the phase is $2\pi\alpha/T$.  For fixed $T$, the amplitude Nyquist diagram of this operator is therefore a segment of the unit circle ($|\alpha| \leq 1$) followed by a spiral towards the origin ($|\alpha| > 1$), illustrated in Figure~\ref{fig:ex_sat_delay}.


    We apply the amplitude describing function method to predict oscillations in the feedback interconnection.  As $\nyq{N}$ is $T$-dependent, we must draw multiple plots: we fix $T$, plot $-1/\nyq{N}$, check for an intersection with $\nyq{\bar G}$, where $G(s) = k/(s+1)$, and find the value of $T_0$ which corresponds to the intersection point on $\nyq{\bar G}$.  We then adjust $T$ and repeat, until $T$ (which determines $\nyq{N}$) matches $T_0$ (given by $\nyq{\bar G}$).  An example plot is shown in Figure~\ref{fig:ex_sat_delay_2}.  The method predicts oscillations for $k > 2.9$.  Experimentally, oscillations are present for $k > 2.2$, and at $k = 2.9$, the amplitude describing function method predicts a period of $T = 3.0$ and an amplitude (at the output of $G$) of $1.01$, whereas simulations give a period of $T = 6.6$ and an amplitude of $2.6$.  As $k$ increases, $N$ becomes increasingly low-pass (relative to the gain of $G$), and the predictions become more accurate.  For $k = 15$, the amplitude describing function method predicts a period of $T = 28.4$ and an amplitude of $13.47$, whereas the simulation gives a period of $T = 31.4$ and an amplitude of $15.0$.
\end{example}

\begin{figure}[htpb]
    \centering
\begin{tikzpicture}[scale=0.8]
\begin{axis}[
        xlabel={$\Re$},
        ylabel={$\Im$},
        grid=both,
        xmin=-0.5, xmax=1.1,
        ymin=-0.7, ymax=0.2,
        no marks,
        legend pos=south east, 
        axis equal image
    ]
    
    \addplot[black, thick] coordinates {(0, -0.7) (0, 0.2)};

    \addplot[black, thick] coordinates {(-0.5, 0) (1.1, 0)};

    \pgfplotstableread[col sep=comma]{figures/sat_delay_nyqa_2.csv}\datatabletwo
    \addplot[MidnightBlue, thick] table[x=x, y=y] {\datatabletwo}; 
\end{axis}
\end{tikzpicture}
\caption{Amplitude Nyquist diagram of the saturated, amplitude-dependent delay $N$ of
Example~\ref{ex:sat_delay}, for $T = 10$.}
\label{fig:ex_sat_delay}
\end{figure}
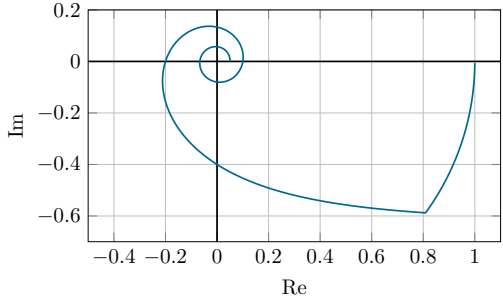

\begin{figure}[htbp]
    \centering
\begin{tikzpicture}[scale=0.8]
\begin{axis}[
        xlabel={$\Re$},
        ylabel={$\Im$},
        grid=both,
        xmin=-3, xmax=12.0,
        ymin=-9.0, ymax=1.0,
        no marks,
        legend pos=south east, 
        axis equal image
    ]
    
    \addplot[black, thick, forget plot] coordinates {(0, -9.0) (0, 1.0)};

    \addplot[black, thick, forget plot] coordinates {(-3, 0) (12, 0)};

    \pgfplotstableread[col sep=comma]{figures/lag_nyqa.csv}\datatableone
    \addplot[BrickRed, dashed, thick] table[x=x, y=y] {\datatableone}; 
    \addlegendentry{$\nyq{\bar G}$}

    \pgfplotstableread[col sep=comma]{figures/sat_delay_nyqa.csv}\datatabletwo
    \addplot[MidnightBlue, thick] table[x=x, y=y] {\datatabletwo}; 
    \addlegendentry{$-1/\nyq{N}$}
\end{axis}
\end{tikzpicture}
\caption{Amplitude Nyquist diagram and amplitude describing function plots for
Example~\ref{ex:sat_delay}, with $k = 11$. The amplitude Nyquist diagram is plotted
for $T = 20.4$, and the intersection occurs at the $T = 20.4$ point on the amplitude
describing function, predicting an oscillation of this period.}
\label{fig:ex_sat_delay_2}
\end{figure}
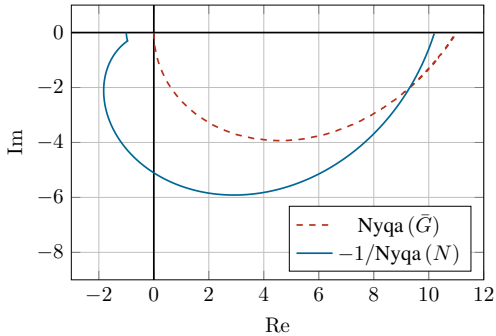

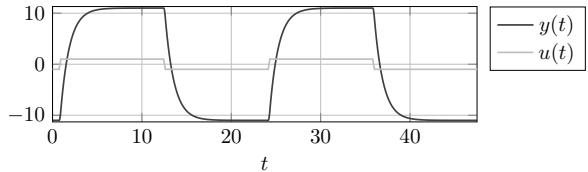
\begin{figure}[h]
    \centering
\begin{tikzpicture}[scale=0.8]
\begin{axis}[
        xlabel={$t$},
        grid=both,
        xmin=0, xmax=47.5,
        ymin=-11.3, ymax=11.3,
        no marks,
        legend pos=outer north east, 
        height=3.5cm,width=\linewidth
    ]
    
    \pgfplotstableread[col sep=comma]{figures/output2.csv}\datatable
    \addplot[darkgray, thick] table[x=t, y=lag] {\datatable}; 
    \addlegendentry{$y(t)$}
    \addplot[lightgray, thick] table[x=t, y=sat] {\datatable}; 
    \addlegendentry{$u(t)$}
\end{axis}
\end{tikzpicture}
\caption{Oscillations in the feedback interconnection of Example~\ref{ex:sat_delay} for $k = 11$.}
\label{fig:ex_2_oscillations}
\end{figure}

\section{Conclusions and future directions}\label{sec:conclusions}

We have introduced a version of the describing function method where sinusoids are
replaced by square waves.  Static nonlinearities become the ``easy'' components,
mapping square waves to square waves and allowing an analog of frequency response
analysis, which we call amplitude response.  The output of an LTI system to a square wave, however, is not square, and we approximate it by a square wave to give a square version of the describing function.  The describing function method of predicting oscillations is generalized to this square wave setting, and predicts the existence and properties of approximately square oscillations better than the classical describing function method.  

Square waves are only one class of signals, and an area for future research is other classes of signals, and systems which preserve them, for which a similar analysis can be accomplished.  A second area for future research is giving rigorous bounds on the error of the amplitude describing function approximation, in a manner similar to work on the classical describing function \cite{Bergen1971, Mees1975}.

\begin{appendices}
    \section{Computing the transform}
In this appendix, we address the computation of the amplitude describing function.  Proposition~\ref{thm:transform_simple} simplifies the characterization of Theorem~\ref{thm:square_transform},  reducing the number of inner products that must be evaluated to compute the transform.  Proposition~\ref{prop:pmp} then gives a class of systems for which this characterization further simplifies and the phase shift $\tau$ may be computed by searching for points which are visited periodically at period $T/2$.
\begin{proposition}\label{thm:transform_simple}
    Let $N: \LT \to \LT$ map pointwise bounded inputs to pointwise bounded outputs, and let $\alpha \in \Ro$, $T > 0$ and $y = N(\alpha \sq)$.  Then 
    if there exists $\tau \in [0, T/2]$ such that
    \begin{IEEEeqnarray*}{rCl}
       \tau &=& \argmin\limits_{\tilde \tau \text{ s.t. } y(\tilde\tau) = y(\tilde\tau + T/2)}
           -\ip{y}{\sq_{\tilde\tau}}^2
    \end{IEEEeqnarray*}
    $\phi = \frac{2\pi}{T}\tau$ solves~\eqref{eq:opt}, and otherwise if $\tau$ is a solution to 
    \begin{IEEEeqnarray*}{rCl}
           -\int_0^{\frac{T}{2\pi}\tau} y(t)\dd{t} + \int_0^{\frac{T}{2\pi}\tau + \frac{T}{2}} y(t) \dd{t}  =
     \frac{1}{2}\int_0^T y(t) \dd{t}
    \end{IEEEeqnarray*}   
    $\phi = \frac{2\pi}{T}\tau$ solves  \eqref{eq:opt}.
\end{proposition}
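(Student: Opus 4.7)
The plan is to replace the continuous optimization~\eqref{eq:opt} by a search over the critical points of a scalar Lipschitz function, using the fact that the two factors of the derivative of $-\langle y, \sq_\tau\rangle^2$ correspond exactly to the two cases in the statement.

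First I would rewrite the minimization in terms of $\tau := \tfrac{T}{2\pi}\tilde\phi \in [0,T/2]$, set $g(\tau) := \langle y, \sq_\tau\rangle$ and $f(\tau) := -g(\tau)^2$, and obtain a closed-form integral representation of $g$. Using the piecewise form of $\sq_\tau$ (equal to $-1/T$ on $[0,\tau)\cup[\tau+T/2,T]$ and $+1/T$ on $[\tau,\tau+T/2)$ when $\tau \in [0,T/2]$), a short computation yields
\begin{equation*}
g(\tau) \;=\; \frac{2}{T}\int_\tau^{\tau+T/2} y(t)\,dt \;-\; \frac{1}{T}\int_0^T y(t)\,dt.
\end{equation*}
Since $y$ is bounded, $g$ is Lipschitz, and the Lebesgue differentiation theorem gives $g'(\tau) = \tfrac{2}{T}[y(\tau+T/2)-y(\tau)]$ at every Lebesgue point of $y$.

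Next, differentiating gives $f'(\tau) = -2\,g(\tau)\,g'(\tau)$, so every stationary point of $f$ satisfies one of two alternatives: either $g'(\tau)=0$, i.e.\ $y(\tau)=y(\tau+T/2)$ --- which is the first case of the proposition --- or $g(\tau)=0$, which, after rearrangement and the substitution $\tau = \tfrac{T}{2\pi}\phi$, is exactly the integral equation appearing in the second case. The antiperiodicity $\sq_{\tau+T/2} = -\sq_\tau$ makes $f$ a $T/2$-periodic continuous function, so the interval $[0,T/2]$ can be treated as a circle with no boundary; compactness then forces the minimum at an interior critical point, and by the dichotomy above this point must realise one of the two cases.

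The main obstacle is the asymmetry between the two cases: the second case forces $f(\tau) = 0$, which is the \emph{maximum} of the nonpositive $f$ in the generic situation and coincides with the minimum only in the degenerate situation $g\equiv 0$. The proof must therefore include a short case split, showing that whenever the first case is non-empty its argmin realises the global minimum, and that otherwise $g$ vanishes identically (so the second case trivially captures every minimizer), which can be driven by an intermediate-value argument applied to $h(\tau):=y(\tau+T/2)-y(\tau)$, exploiting the identity $h(T/2)=-h(0)$ from $T$-periodicity. Making this dichotomy precise --- and justifying the passage to pointwise values of $y$ at isolated points, given only boundedness --- is the step that ties the argument together.
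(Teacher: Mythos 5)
Your proposal is correct and follows essentially the same route as the paper's proof: write $\ip{y}{\sq_{\tilde\tau}}$ out explicitly as an integral, differentiate in $\tilde\tau$, and note that the derivative factors into $\ip{y}{\sq_{\tilde\tau}}$ times $y(\tilde\tau)-y(\tilde\tau+T/2)$, so an attained minimizer must fall into one of the two cases of the statement. The explicit case split you describe (the constrained argmin over $\{\tilde\tau : y(\tilde\tau)=y(\tilde\tau+T/2)\}$ versus the degenerate situation where the inner product, and hence the nonpositive cost, vanishes identically) is precisely what the paper compresses into its parenthetical remark that $-\ip{y}{\sq_{\tilde\tau}}^2\leq 0$, and the measure-theoretic caveat you flag about using pointwise values of the merely bounded $y$ when differentiating is likewise left implicit in the paper's own argument.
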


\begin{proof}
    From Theorem~\ref{thm:square_transform}, we have 
    \begin{IEEEeqnarray*}{rCl}
    \tau &=& \argmin\limits_{\tilde{\tau} \in [0, T/2]}
    -\ip{y}{\sq_{\tilde\tau}}^2.
    \end{IEEEeqnarray*}
    Boundedness of $y$ implies boundedness of $\ip{y}{\sq_{\tilde \tau}}$, so this minimization admits at least one solution.
    Computing the inner product explicitly for $\tilde \tau \in [0, T/2]$ gives
    \begin{IEEEeqnarray*}{rCl}
    \scriptstyle
        &&\ip{y}{\sq_{\tilde{\tau}}}^2\\ &=&\frac{1}{T^2}\left( \int_0^{\tilde{\tau}} -y(t) \dd{t} +
        \int_{\tilde{\tau}}^{{\tilde{\tau}} + \frac{T}{2}} y(t) \dd{t}
                            + \int_{{\tilde{\tau}} + \frac{T}{2}}^T -y(t) \dd{t}\right)^2\\
                            &=& \frac{1}{T^2}\left( -2\int_0^{\tilde{\tau}} y(t)\dd{t} +
                                2\int_0^{{\tilde{\tau}} + \frac{T}{2}} y(t) \dd{t}
                            - \int_0^T y(t) \dd{t}\right)^2,
    \end{IEEEeqnarray*}
    from which we have
    \begin{IEEEeqnarray*}{rCl}
    -\td{}{{\tilde{\tau}}} \ip{y}{\sq_{\tilde{\tau}}}^2 &=&
    -2\ip{y}{\sq_{\tilde{\tau}}}\td{}{{\tilde{\tau}}}\ip{y}{\sq_{\tilde{\tau}}}\\
                                    &=& \frac{4}{T^2}\ip{y}{\sq_{\tilde{\tau}}}
                                    \left(y({\tilde{\tau}}) - y\left({\tilde{\tau}} +
                                    \frac{T}{2}\right)\right).
    \end{IEEEeqnarray*}
    
    This derivative is zero in two cases.  Firstly, when
    \begin{IEEEeqnarray*}{rCl}
    -\int_0^{\tilde\tau} y(t)\dd{t} + \int_0^{\tilde\tau + \frac{T}{2}} y(t) \dd{t}  &=&
      \frac{1}{2}\int_0^T y(t) \dd{t},
    \end{IEEEeqnarray*}
    in which case $\ip{y}{\sq_{\tilde\tau}} = 0$, and secondly when $y(\tilde\tau) = y(\tilde\tau + T/2)$.  These correspond to the two cases of the theorem statement (noting that $-\ip{y}{\sq_{\tilde\tau}}^2 \leq 0$).
\end{proof}

We have tacitly avoided the question of whether the minimization in Definition~\ref{def:adf}
has a 
unique solution.  The following proposition gives a class of systems for which this
is the case. 
A periodic signal is called \emph{periodic monotone} if it intersects any horizontal line $y(t) = \gamma$ at most twice in a period.  A system $G$ is called \emph{periodic monotone preserving} \cite{Ruscheweyh1992, Grussler2023, Grussler2025} if it is periodic preserving and, given a periodically monotone input $u$, $y = G(u)$ is periodically monotone.
\begin{proposition}\label{prop:pmp}
    Suppose $G(s) = H(s)/(s+p)$, where $p > 0$ and $H(s)$ is stable and periodic monotone preserving.  Then
    there is a unique $\tau$ for which $y(\tau) = y(\tau + T/2)$.
\end{proposition}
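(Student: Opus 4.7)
The plan is to reduce the condition $y(\tau) = y(\tau + T/2)$ to a zero-finding problem by exploiting the antiperiodic symmetry of the square wave, namely $\sq(t + T/2) = -\sq(t)$. Any stable LTI operator maps such an antiperiodic input to a steady-state output with the same antiperiodic symmetry: applying the operator to the identity $u + P_{T/2} u \equiv 0$ and using commutativity of $P_{T/2}$ with LTI operators in steady state gives $v + P_{T/2} v \equiv 0$. Applying this first to $H$ and then to $1/(s+p)$, the output $y = G(\alpha \sq)$ satisfies $y(t + T/2) = -y(t)$. Hence the equation $y(\tau) = y(\tau + T/2)$ is equivalent to $y(\tau) = 0$, and it suffices to show that $y$ has exactly one zero in $[0, T/2)$.

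For the uniqueness part, I would argue that $y$ is itself periodic monotone. The square wave $\alpha \sq$ is trivially periodic monotone, taking only two values per period, so by the hypothesis on $H$ the intermediate signal $v := H(\alpha \sq)$ is periodic monotone. The main step is then to invoke the fact that the first-order lag $1/(s+p)$ with $p > 0$ is periodic monotone preserving, a standard result in the totally-positive systems literature \cite{Grussler2023, Grussler2025}. Composing, $y$ is periodic monotone, which means it intersects the horizontal line $y = 0$ at most twice per period.

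Combining the two facts closes the argument: by antiperiodicity, the zeros of $y$ come in pairs $\{\tau, \tau + T/2\}$, so at most two zeros per period translates into at most one zero in $[0, T/2)$; existence of at least one such zero follows by continuity together with the sign flip $y(T/2) = -y(0)$ (assuming $y(0) \neq 0$, otherwise $\tau = 0$ works directly). The main obstacle is the periodic-monotone-preserving property of $1/(s+p)$, which is precisely where the positivity of $p$ enters. If it cannot simply be quoted, it can be established by analysing the scalar ODE $\dot y + p y = v$: any local extremum of $y$ at $t^\ast$ forces $v(t^\ast) = p\, y(t^\ast)$, so the number of local extrema of $y$ per period is bounded by the number of intersections of the graph of $v$ with the horizontal line at level $p\, y(t^\ast)$, which is at most two by periodic monotonicity of $v$.
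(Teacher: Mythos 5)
Your reduction via antiperiodicity is correct and is a genuinely different (and arguably cleaner) uniqueness mechanism than the paper's: since $\sq(t+T/2)=-\sq(t)$ and both $H$ and $1/(s+p)$ are stable LTI blocks, the steady-state output satisfies $y(t+T/2)=-y(t)$, so the condition $y(\tau)=y(\tau+T/2)$ is equivalent to $y(\tau)=0$; periodic monotonicity caps the zeros per period at two, and the $T/2$-pairing of zeros plus the sign flip gives exactly one $\tau$ in $[0,T/2)$. The paper never uses antiperiodicity: it shows $y$ is continuous, increasing on $[t_1,t_2]$ and decreasing on $[t_2,t_1+T]$, and runs a continuity/monotonicity argument on the gap $d(\gamma)$ between the two level-$\gamma$ crossing times, which decreases from $T$ to $0$ and therefore equals $T/2$ at exactly one level.

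The gap is in how you establish that $y$ is periodic monotone. You feed the square wave into $H$ first, asserting that $\alpha\sq$ is ``trivially periodic monotone''; under the paper's definition (at most two intersections with any horizontal line per period) it is not: the line at level $\alpha/T$ meets the graph on an entire half-period, so the assumed preservation property of $H$ cannot be applied directly to $\alpha\sq$. You then additionally need the lemma that $1/(s+p)$ is periodic monotone preserving; citing the variation-diminishing literature is defensible, but your fallback ODE argument does not work as written, because the level $p\,y(t^\ast)$ is different at each extremum $t^\ast$, so counting intersections of $v$ with a single horizontal line does not bound the number of extrema of $y$. The paper sidesteps both issues by exploiting the commutativity of the LTI factors: it applies $1/(s+p)$ to the square wave first, whose steady-state response is explicitly continuous and strictly unimodal (piecewise exponential rise and fall), hence genuinely periodic monotone, and only then invokes the hypothesis on $H$. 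If you reorder your factors in the same way, your antiperiodicity argument goes through unchanged and the extra lemma about the lag becomes unnecessary. (Both your argument and the paper's tacitly exclude the degenerate case $y\equiv 0$, for which uniqueness fails.)
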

\begin{proof}
    Let $u(t) = \sq(t)$.  Then applying $1/(s+p)$ to $u(t)$ gives a steady-state which is continuous and periodic monotone.  It follows that $y = G(u)$ is continuous (since $H$ is stable), $T$-periodic and periodic monotone.  Equivalently \cite{Ruscheweyh1992}, there exist $t_1, t_2 \in \R$, $t_2 > t_1$, such that $y(t)$ is increasing for $t_1 \leq t \leq t_2$ and decreasing for $t_2 \leq t \leq t_1 + T$.  Consider one period of $y$ from $t_1$ to $t_1$ to $t_1 + T$, and define $d(\gamma)$ to be the distance between the two intersection points of this period with the horizontal line at height $\gamma$ if there are two intersection points, and zero if there are one or zero intersection points.
    
    \begin{center}
    \includegraphics[scale=0.8]{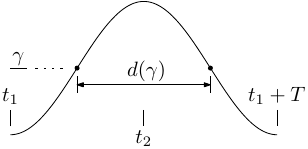} 
    \end{center}
    
    As $t_1$ $t_1 + T$ are at the minimum of $y$ by construction, $d(y(t_1)) = T$.  Similarly, $t_2$ is at the maximum, and $d(y(t_2)) = 0$. For $\gamma \in (y(t_1), y(t_2))$, as $y$ is continuous, $d(\gamma)$ is continuous and monotonically decreasing, therefore there exists a unique $\gamma \in (y(t_1), y(t_2))$ such that $d(\gamma) = T/2$, which concludes the proof.
\end{proof}
\end{appendices}

\bibliographystyle{IEEEtran}
\bibliography{describing_functions}

\end{document}